\newcolumntype{P}[1]{>{\centering\arraybackslash}p{#1}}
\newtheorem{theorem}{Theorem}
\newtheorem*{proposition*}{Proposition}
\newtheorem{lemma}{Lemma}
\newtheorem{remark}{Remark}
\newtheorem{example}{Example}
\newcommand\blfootnote[1]{%
  \begingroup
  \renewcommand\thefootnote{}\footnote{#1}%
  \addtocounter{footnote}{-1}%
  \endgroup
}
\begin{document}

\title{Latent-variable Private Information Retrieval}

 \author{{Islam Samy \quad Mohamed A. Attia \quad Ravi Tandon \quad Loukas Lazos 
 }\\
 Department of Electrical and Computer Engineering,  University of Arizona\\
 Email: \{\textit{islamsamy, madel, tandonr, llazos}\}@email.arizona.edu
 }


\maketitle
\blfootnote{The work of I. Samy and L. Lazos was supported by NSF grant CNS 1813401. The work of M. Attia and R. Tandon was supported by NSF Grants CAREER 1651492, CNS 1715947, and the 2018 Keysight Early Career Professor Award.}

\begin{abstract}
   In many applications, content accessed by users (movies, videos, news articles, etc.) can leak sensitive \textit{latent} attributes, such as religious and political views, sexual orientation, ethnicity, gender, and others. To prevent such information leakage, the goal of classical PIR is to hide the identity of the content/message being accessed, which subsequently also hides the latent attributes. This solution, while private, can be too costly, particularly, when perfect (information-theoretic) privacy constraints are imposed. For instance, for a single database holding $K$ messages, privately retrieving one message is possible if and only if the user downloads the entire database of $K$ messages. Retrieving content privately, however, may not be necessary to perfectly hide the latent attributes.  
  
 Motivated by the above, we formulate and study the problem of latent-variable private information retrieval (LV-PIR), which aims at allowing the user efficiently retrieve one out of $K$ messages (indexed by $\theta$) without revealing any information about the latent variable (modeled by $S$). We focus on the practically relevant setting of  a single database and show that one can significantly reduce the download cost of LV-PIR (compared to the classical PIR) based on the correlation between $\theta$ and $S$. 
  We present a general scheme for LV-PIR as a function of the statistical relationship between $\theta$ and $S$, and also provide new results on the capacity/download cost of LV-PIR. Several open problems and new directions are also discussed. 
  
\end{abstract}

\section{Introduction}

Consider the following scenario. Alice accesses an online service to obtain information hosted on a remote database. The retrieved information reveals with high certainty Alice's political affiliation, making her a target of political advertisements and biased news reports. This scenario is commonplace for the plethora of online services that employ data analytics to profile users and infer private information such as their religious and  political views, sexual orientation, ethnicity, gender, health state, and personality traits \cite{kosinski2013private}. The risks for potential misuse of personal information are further exacerbated by frequent data breaches and flawed privacy policies that lead to privacy leakages such as the infamous Facebook-Cambridge-Analytica incident in 2015 \cite{cadwalladr2018revealed}.

The default technical approach to preserve privacy when data is retrieved from one or more databases is to allow for private information retrieval (PIR) \cite{chor1995private}. By employing a PIR scheme, a user can download intended messages without revealing the message indices to the database. The PIR problem was initially introduced assuming limits on the computational capabilities of the databases \cite{ostrovsky2007survey}. Given the ever-increasing computational power of high-performance computing,  approaches that achieve information-theoretic privacy have recently gained significant attention \cite{shah2014one,sun2017capacity,sun2017optimal,sun2019capacity,
sun2018multiround,sun2018capacity,tajeddine2018private,banawan2018capacity,
wang2017symmetric,freij2017private,sun2018private,jia2019x,lin2018mds,
sun2019breaking,zhou2019capacity,banawan2018multi,zhang2017private,
banawan2019capacity,tajeddine2018robust,banawan2020private,wang2018secure,
wang2018capacity,tandon2017capacity,wei2018fundamental,shariatpanahi2018multi,
attia2018capacity,samy2019capacity,tian2019capacity,wang2019symmetric,
tajeddine2019private,yang2018private,jia2019cross,kumar2019achieving,
raviv2018private,banawan2018private,banawan2020capacity} for different model setups. These approaches investigate the PIR capacity, defined as the maximum ratio of the desired data to the total amount of downloaded data. 
However, when a single database is considered or databases belong to a single operator as is the case for the majority of online services, information-theoretic PIR methods degenerate to communication-expensive solutions requiring the download of the entire database to guarantee privacy.  

In this paper, we investigate if the download cost for the single database scenario can be reduced, while still protecting user privacy. We argue that the content queried by the user is often an intermediate data product that is exploited to infer latent data traits. We focus on providing information-theoretic privacy for the latent traits, even if the intermediate data is exposed. We model the user profile with a {\em latent variable model} captured by a latent random variable $S$. 



\begin{figure}[t]
\begin{center}

\includegraphics[width=0.7\linewidth,=1]{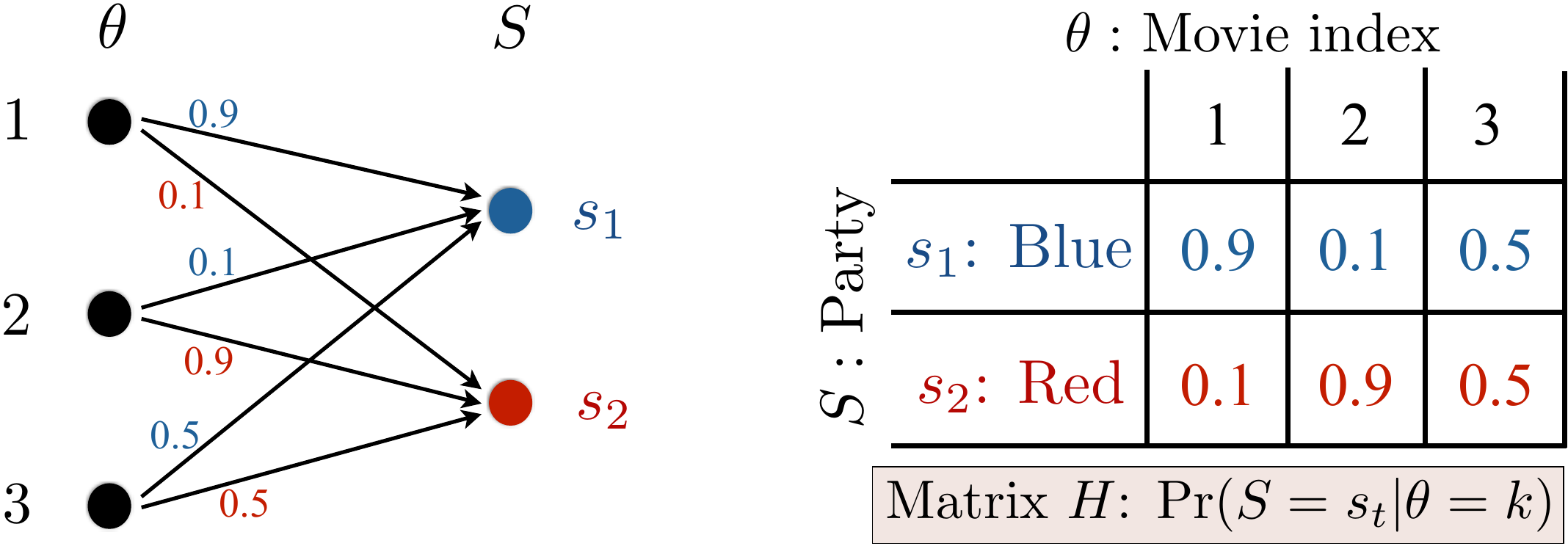}
\end{center}
\caption{Sample realization of the characteristic matrix $\mathbf{H}$. The random variable $\theta$ represents the message index whereas $S$ represents the latent variable.} 
\label{fig:Ex-party}
\end{figure}

To provide an example, consider that Alice uses a streaming service hosting three movies labeled by index $\theta \in\{1,2,3\}.$ The movie access patterns are analyzed to infer if Alice leans towards the blue or the red party. Here, the  latent variable $S$ takes only two values, namely red and blue.
The relationship between the content and the affiliation is captured by the conditional probability distribution $\Pr(S\vert \theta),$ expressed as a matrix $\mathbf{H}$ of size $2 \times 3$, with the $(t,k)$ element being $h_{tk} = \Pr(S = s_t \vert \theta = k)$. A sample realization of  $\mathbf{H}$ is shown in Fig~\ref{fig:Ex-party}. Evidently, if Alice retrieves all three movies, she is associated with each affiliation equiprobably (the marginal distribution of $S$ is uniform). However, the same privacy constraint can be satisfied with a lower download cost. If Alice wants to retrieve Movie $3,$ then querying set $\{3\}$ is sufficient to hide Alice's political affiliation. Although, under this query strategy, the message index is leaked to the database, the index reveals  no {\em additional information} with respect to the latent variable $S,$ which remains uniformly distributed. If Alice wants to retrieve either Movie $1$ or Movie $2,$ then querying set $\{1, 2 \}$ yields a uniform marginal distribution on $S$.  This toy example demonstrates two important points: (a) by considering the privacy of the latent variable, the download cost of PIR can be reduced even for a single database, and (b) the cost becomes dependent on the intended message $k$ and the conditional distribution $\Pr(S\vert \theta)$ captured by $\mathbf{H}.$ 



In this paper, we investigate the advantages of a privacy model that focuses on protecting latent variables as opposed to the data retrieval patterns. We formulate the latent-variable PIR (LV-PIR) problem under a simple, but practically relevant setting of one uncoded database and  provide relevant correctness and privacy definitions. We derive privacy constraints to meet those privacy definitions and formulate a general optimization problem for minimizing the download cost of the LV-PIR. We show that the optimal cost is dependent on the structure and values of $\mathbf{H}.$ In the extreme case where $\mathbf{H}$ is full column rank (each message imposes a unique distribution on $S$), we show that no improvements can be achieved relevant to the classic PIR definition. For the more reasonable case where groups of messages yield the same distribution on $S$, we construct a query strategy based on the structure of $\mathbf{H}$ which reduces the average download cost by a constant factor. Finding the optimal LV-PIR strategy for any $\mathbf{H}$ is left as an open problem.


\section{Problem Formulation}
\label{Sec:prob-form}

\begin{figure}[t]
\begin{center}
\includegraphics[width=0.7\linewidth]{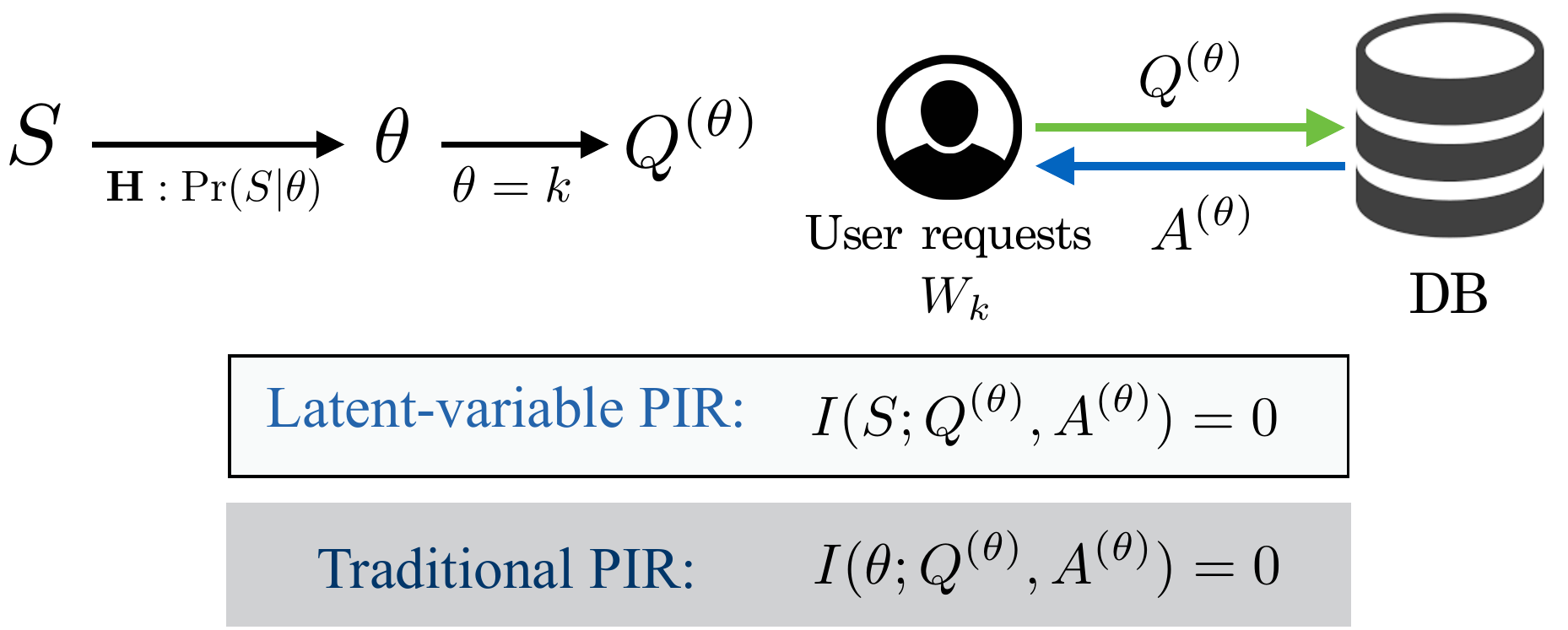}
\end{center}
\caption{\label{fig:sysmod} Latent-variable PIR (LV-PIR) setting for one DB.}
\end{figure}


We consider the PIR setting in Fig.~\ref{fig:sysmod} where a set $\mathcal{W} =\{W_1,W_2, \dots, W_K\}$ of $K$ independent messages, each of size $L$, are stored on one database (DB). 
 The user draws a message index $\theta\sim \vec{p}$, where $\vec{p}$ denotes the $K\times 1$ vector containing the probabilities of selecting the $K$ messages, i.e., $\vec{p}= [\Pr(\theta=1)\ \Pr(\theta=1)\ \cdots \ \Pr(\theta=K)]^T$. The user is interested in retrieving message $W_{\theta}$ while hiding a latent variable $S$, which can take one of  $T$ values from alphabet $\mathcal{S}=\{s_1, s_2,\dots, s_T\}$.
The variable $S$ is dependent on the message index $\theta$, as described by the conditional probability $\Pr(S=s_t|\theta=k)$  where  $t\in[1:T]$ when the desired message is $W_{\theta=k}$, $k\in[1:K]$.
We capture this conditional probability via a $T\times K$ matrix $\mathbf{H}$  with $[\mathbf{H}]_{t,k}\triangleq h_{tk}= \Pr(S=s_t|\theta=k)$. 

We focus on the case where the matrix $\mathbf{H}$ is fixed and publicly-known to the DB and the user. Let  $\overrightarrow{\mathbf{H}(t)}$ be the $t^{th}$ row of $\mathbf{H}$ and $\vec{p}$ be the message probability column vector. The prior distribution of $S$ can be obtained from $\mathbf{H}$ and $\vec{p}$ as 
\begin{equation}
\label{eq:S_init}
   \Pr({S}=s_t)=\sum\limits_{k=1}^{K}\Pr(\theta=k)\cdot \Pr({S}=s_t|\theta=k)=\overrightarrow{\mathbf{H}(t)}\cdot \vec{p }.
\end{equation}


 To retrieve message $W_{\theta}$, the user submits a query $Q^{(\theta)}$ to the DB. The DB determines the corresponding answer  $A^{(\theta)}$ as a function of the query $Q^{(\theta)}$ and the $K$ messages. Therefore,
\begin{equation}\label{answerfunction}
H(A^{(\theta)}
 |Q^{(\theta)},W_{1},\ldots,W_{K}) = 0.
\end{equation}
The variables $S\rightarrow \theta \rightarrow Q^{(\theta)}$ form  a Markov chain as $S$ is conditionally independent of  $Q^{(\theta)}$ given the  index $\theta$, i.e.,
\begin{align}
\label{eq:markov}
    \Pr(S=s_t|\theta = k,Q^{(\theta)} =q) = \Pr(S=s_t|\theta = k).
\end{align}

\noindent An LV-PIR scheme must satisfy the following correctness and privacy constraints.

\noindent $\bullet$ \hspace{5pt} \textbf{Correctness Constraint:}
The user must be able to recover the requested message $W_\theta$ from  $A^{(\theta)},$
\begin{equation}
\label{eq:corre}
H(W_{\theta}|Q^{(\theta)},A^{(\theta)})=0.
\end{equation}

\noindent $\bullet$ \hspace{5pt} \textbf{Latent-variable Privacy Constraint:} The latent variable $S$ should be independent of the submitted query $Q^{(\theta)}$ and its corresponding answer $A^{(\theta)}$, i.e, $Q^{(\theta)}$ and $A^{(\theta)}$ should not leak any \textit{additional} information about $S$ than what is known by the prior distribution of $S$. For any realization of $(Q^{(\theta)}=q,A^{(\theta)}=a)$, we must have
\begin{equation}\label{eq:5}
  \Pr({S}=s_t|Q^{(\theta)}=q,A^{(\theta)}=a) =\Pr({S}=s_t)=\overrightarrow{\mathbf{H}(t)}\cdot \vec{p}. 
\end{equation}
Equivalently, the condition in \eqref{eq:5} can be expressed as:
\begin{equation}
    I(S;Q^{(\theta)},A^{(\theta)})=0.
\end{equation}

 We focus on the case where all messages are equiprobable, i.e., $\vec{p} = \frac{1}{K} \vec{\mathbbm{1}}$, where $\vec{\mathbbm{1}}$ denotes the $K\times 1$ all-ones column vector. The privacy constraint in \eqref{eq:5} can be rewritten as: 
\begin{equation}
\label{eq:pri}
  \Pr({S}=s_t|Q^{(\theta)}=q,A^{(\theta)}=a) =\Pr({S}=s_t)=\frac{1}{K}\overrightarrow{\mathbf{H}(t)}\cdot \vec{\mathbbm{1} }.
\end{equation}

We evaluate the overhead of LV-PIR schemes using the download cost for recovering a desired message while achieving the latent-variable privacy constraint. However, since the structure of $\mathbf{H}$ might not be symmetric, the download cost becomes message-dependent. Therefore, we consider the average download cost over all messages.
Let  $D_{\mathbf{H}}(k)$ be the number of downloaded bits for a desired message $W_{k}$ and a characteristic matrix $\mathbf{H}$. The average number of downloaded bits is then $D_{\mathbf{H}} = \frac{1}{K} \sum_{k=1}^K D_{\mathbf{H}}(k)$. The pair $(L,D_{\mathbf{H}})$ is said to be achievable if there exists an LV-PIR scheme that satisfies the correctness  (eq. \ref{eq:corre}) and latent-variable privacy (eq. \ref{eq:pri}) constraints, and can retrieve a message of size $L$ bits by downloading an average of $D_{\mathbf{H}}$ bits. Our goal is to understand the optimal download cost $D^{*}(\mathbf{H})$, defined as 
\begin{equation}
    D^{*}(\mathbf{H})=\min\{{D_{\mathbf{H}}}/{L}:(L,D_{\mathbf{H}}) \  \text{is achievable}\}.
    \end{equation}

\section{Main Results}

As a baseline, the classical PIR scheme for one DB, i.e., downloading all $K$ messages also satisfies latent-variable privacy. This is
because the query of downloading all messages is independent of $\theta$, and by the Markov chain, $S\rightarrow \theta \rightarrow Q^{(\theta)}$, is also independent of $S$. Thus, the baseline scheme gives a trivial bound $D^{*}(\mathbf{H})\leq K$. 
Suppose now there is a scheme that only downloads a subset of the $K$ messages including the requested message. It is obvious that this scheme no longer satisfies message index privacy, however,  depending on  $\mathbf{H}$, the latent-variable privacy constraint in \eqref{eq:pri} could still be satisfied. To demonstrate this, let us revisit the motivating example in  Fig.~\ref{fig:Ex-party} in further detail.


\begin{example}
\normalfont
Consider three messages indexed by $\theta = \{1, 2, 3\}$ that are related to a latent variable $S$ taking two values $s_1$ and $s_2$. Let $\mathbf{H}$, expressing the relationship (conditional probability) between $S$ and $\theta$, be
\begin{align}
    \mathbf{H} = \left[\begin{array}{ccc}
          0.1 &0.9 &0.5 \\
          0.9 &0.1 &0.5
    \end{array}\right].\label{eq:Hexample}
\end{align}
From \eqref{eq:S_init} and \eqref{eq:Hexample},  the prior distribution of $S$ can be computed as 
 $
    \Pr(S=s_t)= 0.5, \forall t\in\{1,2\}.
$
Suppose that the desired message is indexed by $\theta = 3$. In this case,   requesting message $W_3$ alone leads to the same distribution over $S$ compared to the prior distribution.  
Suppose now that the desired message is indexed by $\theta=1$ (respectively, $\theta=2$). In this case, requesting message $W_1$ (respectively, $W_2$) alone implies a different posterior distribution on $S$ compared to the prior distribution. 
To match the prior distribution, whenever $\theta=1$ or $\theta=2$, the user can download both $\{W_1, W_2\}$. 
Based on this discussion, we have the following query structure for the specific $\mathbf{H}$: \begin{align}
    Q^{(\theta)} =\begin{cases}
    \{1,2\}, &\quad \theta = 1 \text{ or } \theta=2,\\
    \{3\}, &\quad \theta = 3.
    \end{cases}
\end{align}
It is easy to verify that the query structure satisfies the privacy constraint, i.e., $\Pr(S=s_t|Q^{(\theta)}=q) = \Pr(S=s_t) =0.5$, $\forall t, q$. The download cost for each message is $D_{\mathbf{H}}(3)=1$ and $D_{\mathbf{H}}(1)=D_{\mathbf{H}}(2)=2$. Consequently, the average download cost  $D_{\mathbf{H}}=\nicefrac{1}{3}\cdot(2+2+1) = \nicefrac{5}{3}<3$, which is lower than the download cost of the classical PIR.  

%
\end{example}
\subsection{Sufficient Conditions for Latent Variable Privacy}
In this section, we present sufficient conditions that any  LV-PIR scheme must satisfy for an arbitrary $\mathbf{H}$.
Suppose that the message index is $\theta=k$, i.e.,  the desired message is $W_{k}$. Let the  user send a subset query $Q^{(\theta)}=\mathcal{M}_k$  of size $|\mathcal{M}_{k}|\leq K$ to the DB, requesting to download all messages whose indices are in $\mathcal{M}_{k}$. The set $\mathcal{M}_{k}$ can also be expressed as a $K\times 1$ binary vector $\vec{b}_{\mathcal{M}_{k}}\in \{0,1\}^K$ with 
\begin{equation}
   \vec{b}_{\mathcal{M}_{k}}(i)=\begin{cases} 1 , \quad  \ i\in \mathcal{M}_{k},\\
    0 , \quad  \ i\not\in \mathcal{M}_{k}. 
    \end{cases}
\end{equation}
To satisfy the correctness constraint, the desired message $k$ must be downloaded, i.e., $k \in \mathcal{M}_{k}$.

We assume a class of query structures
such that the conditional distribution of $\theta$ given the query $Q^{(\theta)}$ is uniform and can be obtained as
\begin{equation}
    \Pr(\theta=i|Q^{(\theta)}=\mathcal{M}_k)=\begin{cases}\frac{1}{|\mathcal{M}_{k}|}, &  \ i\in \mathcal{M}_k,\\
    0, & \ i\not\in \mathcal{M}_k.
    \end{cases}\label{eq:conditionaltheta}
\end{equation}
In other words, given the query $Q^{(\theta)} = \mathcal{M}_k$, all the messages in $\mathcal{M}_k$ are equally likely to be requested from the database perspective.
In the next Lemma, we present a necessary condition for a subset query to satisfy the latent-variable privacy constraint \eqref{eq:pri}. Intuitively, this Lemma formalizes the privacy constraint, i.e., the posterior distribution  of $S$ given any query realization should match the prior distribution of $S$.

\begin{lemma}
\label{Thm1}
An LV-PIR scheme for one DB, aiming to download message $W_k$ with a query $Q^{(\theta)}=\mathcal{M}_{k}$, as described by the binary vector $\vec{b }_{\mathcal{M}_{k}}$,  is private for a latent variable $S$ if 
\begin{equation}
\label{eq:Thm1}
    \frac{1}{|\mathcal{M}_{k}|}{\mathbf{H}}\cdot \vec{b}_{\mathcal{M}_{k}}=\frac{1}{K}{\mathbf{H}}\cdot \vec{\mathbbm{1} }.
\end{equation}
\end{lemma}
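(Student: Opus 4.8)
The plan is to verify the privacy requirement \eqref{eq:pri} directly: compute the posterior $\Pr(S = s_t \mid Q^{(\theta)} = q, A^{(\theta)} = a)$ induced by the subset query $\mathcal{M}_k$ and show it coincides with the prior $\Pr(S = s_t)$ for every $t$ exactly when \eqref{eq:Thm1} holds. The first reduction I would make is that the answer is irrelevant once the query is fixed, i.e., $\Pr(S = s_t \mid Q^{(\theta)} = q, A^{(\theta)} = a) = \Pr(S = s_t \mid Q^{(\theta)} = q)$ for every positive-probability realization $(q,a)$. This is because, by \eqref{answerfunction}, $A^{(\theta)}$ is a deterministic function of $Q^{(\theta)}$ and $W_1,\dots,W_K$, and the messages are independent of the pair $(S,\theta)$; hence conditioning further on $A^{(\theta)}$ adds nothing about $S$. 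So it suffices to match $\Pr(S = s_t \mid Q^{(\theta)} = \mathcal{M}_k)$ to the prior.

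Next I would expand this posterior over the message index by the law of total probability,
\[\Pr(S = s_t \mid Q^{(\theta)} = \mathcal{M}_k) = \sum_{i=1}^{K} \Pr(S = s_t \mid \theta = i,\, Q^{(\theta)} = \mathcal{M}_k)\,\Pr(\theta = i \mid Q^{(\theta)} = \mathcal{M}_k),\]
and then apply the two structural facts. The Markov chain $S \to \theta \to Q^{(\theta)}$ from \eqref{eq:markov} lets me replace $\Pr(S = s_t \mid \theta = i, Q^{(\theta)} = \mathcal{M}_k)$ by $\Pr(S = s_t \mid \theta = i) = h_{ti}$, and the assumed uniform query structure \eqref{eq:conditionaltheta} lets me replace $\Pr(\theta = i \mid Q^{(\theta)} = \mathcal{M}_k)$ by $1/|\mathcal{M}_k|$ for $i \in \mathcal{M}_k$ and by $0$ otherwise. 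The sum then collapses to $\frac{1}{|\mathcal{M}_k|}\sum_{i \in \mathcal{M}_k} h_{ti}$, which is precisely the $t$-th coordinate of the vector $\frac{1}{|\mathcal{M}_k|}\mathbf{H}\,\vec{b}_{\mathcal{M}_k}$.

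Finally I would compare against the prior: specializing \eqref{eq:S_init} to $\vec{p} = \frac{1}{K}\vec{\mathbbm{1}}$ shows $\Pr(S = s_t)$ is the $t$-th coordinate of $\frac{1}{K}\mathbf{H}\,\vec{\mathbbm{1}}$. Therefore requiring $\Pr(S = s_t \mid Q^{(\theta)} = \mathcal{M}_k) = \Pr(S = s_t)$ for all $t \in [1:T]$ is exactly the vector identity \eqref{eq:Thm1}, so assuming \eqref{eq:Thm1} yields the privacy constraint and proves the lemma. (Conditioning is legitimate since $k \in \mathcal{M}_k$ forces $\Pr(Q^{(\theta)} = \mathcal{M}_k) > 0$.)

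I expect the only delicate point to be the first reduction — cleanly arguing that $A^{(\theta)}$ contributes no information about $S$ beyond $Q^{(\theta)}$ — which rests on the independence of $W_1,\dots,W_K$ from $(S,\theta)$ together with \eqref{answerfunction}; everything after that is a routine chain of total probability, the Markov property, and \eqref{eq:conditionaltheta}. It is also worth remarking that \eqref{eq:Thm1} is established here as a sufficient condition within the class of query strategies satisfying \eqref{eq:conditionaltheta}, and that if the scheme uses several candidate query sets for the same value of $\theta$, the argument is applied to each realized set $\mathcal{M}_k$ separately.
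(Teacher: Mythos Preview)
Your proof is correct and follows essentially the same route as the paper: expand $\Pr(S=s_t\mid Q^{(\theta)}=\mathcal{M}_k)$ over $\theta$ via total probability, apply \eqref{eq:conditionaltheta} and the Markov chain \eqref{eq:markov} to reduce it to $\frac{1}{|\mathcal{M}_k|}\overrightarrow{\mathbf{H}(t)}\cdot \vec{b}_{\mathcal{M}_k}$, and equate with the prior. The one place you are more careful than the paper is your initial reduction from conditioning on $(Q^{(\theta)},A^{(\theta)})$ to conditioning on $Q^{(\theta)}$ alone; the paper's proof silently drops $A^{(\theta)}$, so your explicit justification via \eqref{answerfunction} and the independence of the messages from $(S,\theta)$ is a welcome addition rather than a departure.
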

\begin{proof}
The above result follows due to the following arguments: a) if $i\not\in \mathcal{M}_k$, then $i$ cannot be the requested message index as the correctness constraint will be violated; and b) if
$i\in \mathcal{M}_k$, then from DB's perspective, all the messages in $\mathcal{M}_{k}$ are equiprobable.
Given the above statements, we can now compute the conditional distribution of $S$  given $Q^{(\theta)}=\mathcal{M}_k$ (posterior of $S$ given the query) as follows,
\begin{align}
   &\Pr(S=s_t|Q^{(\theta)}=\mathcal{M}_k)\nonumber\\
   &=\sum_{i=1}^{K} \Pr(\theta=i|Q^{(\theta)}=\mathcal{M}_k)\cdot \Pr({S}=s_t|\theta=i,Q^{(\theta)}=\mathcal{M}_k)\nonumber\\
   &\overset{(a)}{=}\frac{1}{|\mathcal{M}_{k}|}\sum_{i\in\mathcal{M}_k}\Pr(S=s_t|\theta=i,Q^{(\theta)}=\mathcal{M}_k)\nonumber\\
  &\overset{(b)}{=}\frac{1}{|\mathcal{M}_{k}|}\sum_{i\in\mathcal{M}_k}\Pr({S}=s_t|\theta=i)=\frac{1}{|\mathcal{M}_{k}|}\overrightarrow{\mathbf{H}(t)}\cdot \vec{b}_{\mathcal{M}_{k}},\label{eq:sgivenquery}
   \end{align}
where $(a)$ follows from \eqref{eq:conditionaltheta} and $(b)$ follows from \eqref{eq:markov}.  Hence, using \eqref{eq:sgivenquery}, the query satisfies latent-variable privacy constraint in \eqref{eq:pri}, if 
    $\frac{1}{|\mathcal{M}_{k}|}\overrightarrow{\mathbf{H}(t)}\cdot \vec{b}_{\mathcal{M}_{k}}=\frac{1}{K}\overrightarrow{\mathbf{H}(t)}\cdot \vec{\mathbbm{1} }, \quad\forall t\in [1:T]$,
or equivalently, $   \frac{1}{|\mathcal{M}_{k}|}{\mathbf{H}}\cdot\vec{b}_{\mathcal{M}_{k}}=\frac{1}{K}{\mathbf{H}}\cdot \vec{\mathbbm{1} }$.
\end{proof}

\subsection{A General LV-PIR Scheme based on exhaustive search (ES)}
Using Lemma \ref{Thm1}, we next describe a general LV-PIR scheme. We assume $\ell$ disjoint partitions of the message indices $\mathcal{L}_1,\ldots,\mathcal{L}_{\ell}$ such that $\mathcal{L}_j\subseteq [1:K]$ and $\sum_{j=1}^\ell |\mathcal{L}_j|=K$. In order to download a message indexed by $\theta=k$, the user sends the query $Q^{(\theta)}=\mathcal{L}_j$ such that $k\in \mathcal{L}_j$. 
This query structure insures the condition \eqref{eq:conditionaltheta} is satisfied. That is for a given query $Q^{(\theta)}=\mathcal{L}_j$, all the messages $i\in \mathcal{L}_j$ are equally likely. The average number of downloaded bits for this LV-PIR scheme is given as follows:
\begin{align}
D_{\mathbf{H}} = \frac{1}{K}\sum_{k=1}^K D_{\mathbf{H}}(k) = \frac{1}{K} \sum_{j=1}^\ell \sum_{k\in \mathcal{L}_j} |\mathcal{L}_j|L = \frac{L}{K} \sum_{j=1}^\ell |\mathcal{L}_j|^2.
\end{align}

The key idea is that we can perform an exhaustive search over all possible partitions $\mathcal{L}_1,\ldots,\mathcal{L}_{\ell}$ such that the privacy condition in Lemma \ref{Thm1} is satisfied and that the average download cost is minimized. This scheme  leads to an upper bound on the optimal average download cost as stated in the following Theorem.

\begin{theorem}
\label{thmexhaustive}
The optimal average download cost for LV-PIR from one DB is upper bounded as follows
\begin{align}
    &D^*(\mathbf{H})\leq D^{\text{ES}}(\mathbf{H}) =\min_{\mathcal{L}_1,\ldots,\mathcal{L}_{\ell}} \frac{1}{K}\sum_{j=1}^{\ell} |\mathcal{L}_{j}|^2, \\
    \text{s.t.} 
    \quad 
    &\mathcal{L}_{j}\subseteq[1:K], \quad \cup_{j=1}^{\ell} \mathcal{L}_{j} =[1:K], \quad \mathcal{L}_{j}\cap \mathcal{L}_{j'} =\emptyset, \nonumber\\
    &\frac{1}{|\mathcal{L}_{j}|}{\mathbf{H}}\cdot \vec{b}_{\mathcal{L}_{j}}=\frac{1}{K}{\mathbf{H}}\cdot \vec{\mathbbm{1}}, \quad  \quad \forall j\neq j'\in[1:\ell].
\end{align}
\end{theorem}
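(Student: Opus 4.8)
The plan is to verify that the exhaustive-search scheme described just before the theorem is a legitimate LV-PIR scheme, so that its cost is an achievable upper bound on $D^*(\mathbf{H})$, and then to observe that minimizing over all admissible partitions gives exactly the stated expression. Concretely, fix any partition $\mathcal{L}_1,\ldots,\mathcal{L}_\ell$ of $[1:K]$ satisfying the constraints. I would first spell out the scheme: on input $\theta=k$, locate the unique block $\mathcal{L}_j$ with $k\in\mathcal{L}_j$, submit $Q^{(\theta)}=\mathcal{L}_j$, and download all messages indexed by $\mathcal{L}_j$; the database returns $A^{(\theta)}=(W_i)_{i\in\mathcal{L}_j}$. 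Then I would check the three requirements in turn.

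First, correctness: since $k\in\mathcal{L}_j$, the message $W_k$ is among the downloaded messages, so $H(W_\theta\mid Q^{(\theta)},A^{(\theta)})=0$, matching \eqref{eq:corre}. Second, the uniformity condition \eqref{eq:conditionaltheta}: because the blocks are disjoint and $\theta$ is uniform on $[1:K]$ (as $\vec p=\frac1K\vec{\mathbbm 1}$), conditioning on the event $\{Q^{(\theta)}=\mathcal{L}_j\}=\{\theta\in\mathcal{L}_j\}$ leaves $\theta$ uniform on $\mathcal{L}_j$, so $\Pr(\theta=i\mid Q^{(\theta)}=\mathcal{L}_j)=\frac{1}{|\mathcal{L}_j|}$ for $i\in\mathcal{L}_j$ and $0$ otherwise. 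Third, privacy: the constraint $\frac{1}{|\mathcal{L}_j|}\mathbf{H}\cdot\vec b_{\mathcal{L}_j}=\frac1K\mathbf{H}\cdot\vec{\mathbbm 1}$ is precisely \eqref{eq:Thm1} of Lemma~\ref{Thm1} with $\mathcal{M}_k=\mathcal{L}_j$, so the lemma gives $\Pr(S=s_t\mid Q^{(\theta)}=\mathcal{L}_j)=\Pr(S=s_t)$ for every $t$ and every block $j$. One still has to upgrade this to independence from the pair $(Q^{(\theta)},A^{(\theta)})$: since $A^{(\theta)}$ is a deterministic function of $Q^{(\theta)}$ and the messages $W_1,\ldots,W_K$, and the messages are independent of $(\theta,S)$, conditioning additionally on $A^{(\theta)}$ does not change the posterior of $S$; hence $\Pr(S=s_t\mid Q^{(\theta)}=q,A^{(\theta)}=a)=\Pr(S=s_t)$, i.e. $I(S;Q^{(\theta)},A^{(\theta)})=0$, which is \eqref{eq:pri}.

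With correctness and privacy established, the cost computation is routine: for each $k\in\mathcal{L}_j$ the scheme downloads $|\mathcal{L}_j|$ messages of $L$ bits each, so $D_{\mathbf{H}}(k)=|\mathcal{L}_j|L$, and averaging gives $D_{\mathbf{H}}=\frac{1}{K}\sum_{j=1}^\ell\sum_{k\in\mathcal{L}_j}|\mathcal{L}_j|L=\frac{L}{K}\sum_{j=1}^\ell|\mathcal{L}_j|^2$, as already displayed. Thus the pair $(L,D_{\mathbf{H}})$ with $D_{\mathbf{H}}/L=\frac1K\sum_j|\mathcal{L}_j|^2$ is achievable for every admissible partition, and by definition of $D^*(\mathbf{H})$ as an infimum over achievable normalized costs we get $D^*(\mathbf{H})\le\min_{\mathcal{L}_1,\ldots,\mathcal{L}_\ell}\frac1K\sum_{j=1}^\ell|\mathcal{L}_j|^2$ over partitions satisfying the constraints, which is the claimed bound. (The feasible set is nonempty since the trivial partition $\ell=1$, $\mathcal{L}_1=[1:K]$ always satisfies the privacy constraint, recovering the baseline $D^*(\mathbf{H})\le K$.)

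I do not anticipate a serious obstacle here — the theorem is essentially a packaging of Lemma~\ref{Thm1} together with the cost bookkeeping. The only step that needs a little care is the passage from "posterior of $S$ given the query matches the prior" to "posterior of $S$ given the query \emph{and} the answer matches the prior," i.e. justifying that revealing the downloaded message contents leaks nothing extra about $S$; this rests on the independence of the messages $\{W_i\}$ from $(\theta,S)$ and the fact that $A^{(\theta)}$ is generated from $Q^{(\theta)}$ and $\{W_i\}$ alone (the Markov structure $S\to\theta\to Q^{(\theta)}$ and \eqref{answerfunction}). Everything else is a direct substitution into the definitions.
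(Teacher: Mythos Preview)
Your proposal is correct and follows the same approach as the paper: the paper does not give a separate formal proof of this theorem but simply describes the partition-based scheme, invokes Lemma~\ref{Thm1} for privacy, and records the cost computation $D_{\mathbf{H}}=\frac{L}{K}\sum_j |\mathcal{L}_j|^2$ in the paragraph immediately preceding the statement. Your write-up makes the implicit steps explicit (in particular the check of \eqref{eq:conditionaltheta} via disjointness and the upgrade from $\Pr(S\mid Q^{(\theta)})$ to $\Pr(S\mid Q^{(\theta)},A^{(\theta)})$), which the paper omits, but the underlying argument is identical.
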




In the next Theorem, we show that when the characteristic matrix $\mathbf{H}$ is full column rank, then the exhaustive search scheme becomes equivalent to classical PIR, i.e., one cannot do any better than downloading all $K$ messages for LV-PIR. 

\begin{theorem} 
\label{theoremfullrank}
If the characteristic matrix $\mathbf{H}$ is full column rank, i.e., the columns of $\mathbf{H}$ are independent, then $D^{\text{ES}}(\mathbf{H})=K$. In other words, the exhaustive search scheme is equivalent to the classical PIR scheme of downloading all $K$ messages. 
\end{theorem}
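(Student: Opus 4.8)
The plan is to show that the only partition of $[1:K]$ satisfying the privacy constraint of Lemma~\ref{Thm1} is the trivial one-block partition $\mathcal{L}_1 = [1:K]$, from which $D^{\text{ES}}(\mathbf{H}) = \frac{1}{K}\cdot K^2 = K$ follows immediately. So fix any feasible partition $\mathcal{L}_1,\ldots,\mathcal{L}_\ell$ and consider a single block $\mathcal{L}_j$ with binary indicator $\vec{b}_{\mathcal{L}_j}$. The privacy condition reads $\frac{1}{|\mathcal{L}_j|}\mathbf{H}\cdot\vec{b}_{\mathcal{L}_j} = \frac{1}{K}\mathbf{H}\cdot\vec{\mathbbm{1}}$, which I would rearrange as
\begin{equation}
\mathbf{H}\cdot\left( K\,\vec{b}_{\mathcal{L}_j} - |\mathcal{L}_j|\,\vec{\mathbbm{1}}\right) = \vec{0}.
\end{equation}
The vector $\vec{v} := K\,\vec{b}_{\mathcal{L}_j} - |\mathcal{L}_j|\,\vec{\mathbbm{1}}$ therefore lies in the nullspace of $\mathbf{H}$.

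Now I invoke the hypothesis: $\mathbf{H}$ has full column rank, so its nullspace is trivial, i.e., $\vec{v} = \vec{0}$. This means $K\,\vec{b}_{\mathcal{L}_j}(i) = |\mathcal{L}_j|$ for every coordinate $i\in[1:K]$. Since $\vec{b}_{\mathcal{L}_j}(i)\in\{0,1\}$, the left side is either $0$ or $K$; matching it to the constant $|\mathcal{L}_j|$ forces either $|\mathcal{L}_j| = 0$ (so the block is empty, which we exclude or absorb) or $|\mathcal{L}_j| = K$ with $\vec{b}_{\mathcal{L}_j} = \vec{\mathbbm{1}}$. Hence every nonempty block of a feasible partition must be all of $[1:K]$, and since the blocks are disjoint and cover $[1:K]$, there is exactly one block, $\mathcal{L}_1 = [1:K]$. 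Plugging into the objective gives $D^{\text{ES}}(\mathbf{H}) = \frac{1}{K}|\mathcal{L}_1|^2 = K$, and combined with the trivial upper bound $D^{\text{ES}}(\mathbf{H})\le K$ from the baseline scheme (or just noting this is the unique feasible value), we conclude $D^{\text{ES}}(\mathbf{H}) = K$.

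The argument is essentially one linear-algebra observation, so I do not anticipate a genuine obstacle; the only point requiring a little care is the bookkeeping around empty blocks in the definition of a partition — I would either stipulate $|\mathcal{L}_j|\ge 1$ from the outset (consistent with $\sum_j|\mathcal{L}_j| = K$ and the blocks being used as queries) or remark that empty blocks can be discarded without affecting the cost, so the reduction to a single full block is unaffected. It is also worth stating explicitly why full column rank is the natural hypothesis here: it is exactly the condition under which distinct messages induce distinct (indeed linearly independent) conditional laws on $S$, so no nontrivial mixing of message indices can reproduce the prior — which is the intuition the theorem formalizes.
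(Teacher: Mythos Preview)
Your proof is correct and takes essentially the same approach as the paper: both show that the privacy constraint on a block $\mathcal{L}_j$ with $|\mathcal{L}_j|<K$ forces a nontrivial linear relation among the columns of $\mathbf{H}$, contradicting full column rank. The paper writes this out by isolating one column as an explicit combination of the others, whereas you phrase it more compactly via the nullspace of $\mathbf{H}$; the underlying argument is identical.
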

\begin{proof}
We prove this result by contradiction. Consider a matrix $\mathbf{H}$ with full column rank.  For a requested message index $\theta=k$, suppose there exists a scheme that downloads a subset  $Q^{(\theta)}=\mathcal{L}_j$ with $k\in \mathcal{L}_j$ and  $|\mathcal{L}_j|<K$ while satisfying  latent-variable privacy constraint in Lemma \ref{Thm1}: 
 \begin{equation}
 \label{eq:lem1-1}
    \frac{1}{| \mathcal{L}_j|}{\mathbf{H}}\cdot \vec{b }_{\mathcal{L}_j}=\frac{1}{K}{\mathbf{H}}\cdot \vec{\mathbbm{1} }.
\end{equation}
We can express this constraint as follows:
\begin{equation}
\label{eq:rem1-col}
    \frac{1}{| \mathcal{L}_j|}\sum_{i\in\mathcal{L}_j}\overrightarrow{\mathbf{H}(:,i)} =\frac{1}{K}\sum_{i=1}^{K}\overrightarrow{\mathbf{H}(:,i)},
\end{equation}
where $\overrightarrow{\mathbf{H}(:,i)}$ is the $i^{th}$ column of $\mathbf{H}$. Rearranging \eqref{eq:rem1-col}, we get
\begin{align}
    K\sum_{i\in\mathcal{L}_j}\overrightarrow{\mathbf{H}(:,i)} 
    &=| \mathcal{L}_j|\sum_{i\in \mathcal{L}_j}\overrightarrow{\mathbf{H}(:,i)} +| \mathcal{L}_j|\sum_{i\not\in \mathcal{L}_j}\overrightarrow{\mathbf{H}(:,i)}, \nonumber
\end{align}
which can be rearranged as follows:
\begin{equation}
\label{eq:rem1_25}
    (K-| \mathcal{L}_j|)\sum_{i\in\mathcal{L}_j}\overrightarrow{\mathbf{H}(:,i)} =| \mathcal{L}_j|\sum_{i\notin\mathcal{L}_j}\overrightarrow{\mathbf{H}(:,i)}.
\end{equation}
Let us consider any index $c\in\mathcal{L}_j$. Then from \eqref{eq:rem1_25}, we have
\begin{equation}
    \overrightarrow{\mathbf{H}(:,c)}=\frac{| \mathcal{L}_j|}{K-| \mathcal{L}_j|}\sum_{i\notin\mathcal{L}_j}\overrightarrow{\mathbf{H}(:,i)}-\sum_{i\in \mathcal{L}_j\setminus{\{c\}}}\overrightarrow{\mathbf{H}(:,i)}.
\end{equation}
This implies that the column $\overrightarrow{\mathbf{H}(:,c)}$ can be written as a linear combination of the remaining $K-1$ columns of $\mathbf{H}$, contradicting the fact that $\mathbf{H}$ has full column rank. Thus, there does not exist any $\mathcal{L}_j$ with $| \mathcal{L}_j|< K$ that satisfies \eqref{eq:Thm1}.
\end{proof}

The result of Theorem~\ref{theoremfullrank} may seem rather pessimistic indicating that when each message index ``independently'' maps to the latent variable, then the only option is to download the entire database.
However, in many practical scenarios, the number of messages $K$ (e.g., number of movies in a DB) is significantly larger than the alphabet size $|\mathcal{S}|=T$ (e.g., political affiliation of a user). In these cases, the matrix $\mathbf{H}$ is not full column rank and the message indices mapping to the latent variable are linearly dependent. 
When $\mathbf{H}$ is not full column rank, one can always resort to the exhaustive search based scheme presented in Theorem~\ref{thmexhaustive}. However, this scheme has exponential complexity and may scale poorly with the size of the DB. 
Next, we examine a special case for the characteristic matrix $\mathbf{H}$ for which we present a low-complexity LV-PIR scheme that can also reduce the download cost. 


\subsection{Low-complexity LV-PIR Scheme}

For practical cases when the number of messages $K$ is large (in comparison to $T$), we expect that groups of messages are statistically equivalent. For instance, a group of movies from the same director or the same studio may imply a similar political affiliation (conditional probability). In such cases, the characteristic matrix $\mathbf{H}$ can have several repeated  columns. This gives the opportunity to make equivalent selections of messages when constructing a query and potentially reduce the download cost. We first demonstrate this with an example. 

\begin{example}
\normalfont
Let us consider a LV-PIR setting with $K=6$ messages, and a latent variable $S$ taking $T=3$ values, correlated via a matrix $\mathbf{H}$ of the form 
\begin{equation}
\mathbf{H}=\begin{bmatrix}
    0.3 & 0.3&  0.3 & 0.3 & 0.4 & 0.4\\
    0.1 & 0.1&  0.1 & 0.1 & 0.3 & 0.3\\
    0.6 & 0.6&  0.6 & 0.6 & 0.3 & 0.3\\
\end{bmatrix}.
\end{equation}
We notice that this structure of $\mathbf{H}$ implies two groups of messages with the same mapping to the latent variable $S$, $\mathcal{G}_1=\{W_1,W_2,W_3,W_4\}$, and $\mathcal{G}_2=\{W_5,W_6\}$. Assume downloading subsets $\mathcal{G}'_1\subseteq \mathcal{G}_1$ and $\mathcal{G}'_2 \subseteq \mathcal{G}_2$ of messages from the two groups, with the inclusion of the desired message. For such a query structure, the privacy condition in Lemma~\ref{Thm1} can be expressed as
\begin{equation*}
  \frac{|\mathcal{G}'_1|}{|\mathcal{G}'_1|+|\mathcal{G}'_2|}\begin{bmatrix}
    0.3 \\
    0.1 \\
    0.6 \\
\end{bmatrix}+\frac{|\mathcal{G}'_1|}{|\mathcal{G}'_1|+|\mathcal{G}'_2|}\begin{bmatrix}
    0.4 \\
    0.3 \\
    0.3 \\
\end{bmatrix}=  \frac{4}{6}\begin{bmatrix}
    0.3 \\
    0.1 \\
    0.6 \\
\end{bmatrix}+\frac{2}{6}\begin{bmatrix}
    0.4 \\
    0.3 \\
    0.3 \\
\end{bmatrix}.
\end{equation*}
It is obvious that the choice $\mathcal{G}'_1=\mathcal{G}_1$ and $\mathcal{G}'_2=\mathcal{G}_2$ satisfies the above condition, which is the classical PIR scheme (downloading all messages).
Another choice that also satisfies the above constraint is to pick $\mathcal{G}'_1$ and $\mathcal{G}'_2$ such that $|\mathcal{G}'_1| =2 $ and $|\mathcal{G}'_2| =1 $ (including the desired message). For instance, if message $W_1$ is desired, the user can download $W_1$ along with any one of  $\{W_2,W_3,W_4\}$ and any one message form $\{W_5,W_6\}$. Hence, a possible query for retrieving $W_1$ can be $\mathcal{G}'_1= \{1,2\}$ and $\mathcal{G}'_2= \{5\}$. 
The average download cost for this scheme is therefore $D_\mathbf{H}=D_\mathbf{H}(k)=3$, $k\in[1:6]$.

In this example, we observe that the average download cost of LV-PIR is reduced by a constant factor compared to the classical PIR. This factor is equal to the greatest common divisor between the groups sizes, i.e., $\text{gcd}(|\mathcal{G}_1|$,$|\mathcal{G}_2|) =2$. Building upon this motivating example, we next formalize our low-complexity LV-PIR scheme. 
\end{example}


\noindent {\bf A low-complexity LV-PIR scheme for groups.} 
Consider the scenario when the set of messages indices $[1:K]$ can be organized into $P$ non-overlapping groups/subsets  represented by
$\mathcal{G}_1,\dots, \mathcal{G}_P$, such that $\sum_{p=1}^{P}|\mathcal{G}_p|=K$. Two message indices $k_1$ and $k_2$ belong to the same group $\mathcal{G}_p$ if they have identical columns in the matrix $\mathbf{H}$, i.e., $\overrightarrow{\mathbf{H}(:,k_1)}= \overrightarrow{\mathbf{H}(:,k_2)}$. The following Theorem gives an LV-PIR scheme that achieves an upper bound for the optimal download cost. 

\vspace{-5pt}
\begin{theorem}
\label{Thm2}
For an LV-PIR problem with characteristic matrix  $\mathbf{H}$ and  $P$ message groups, $\mathcal{G}_1,\dots, \mathcal{G}_P$, the optimal average download cost is upper bounded by 
\begin{align}
    D^*(\mathbf{H})\leq D^{\text{G}}({\mathbf{H}}) =K/\rho,
\end{align}
where  $\rho=\text{gcd}(|\mathcal{G}_1|,|\mathcal{G}_2|,\dots,|\mathcal{G}_P|)$.
\end{theorem}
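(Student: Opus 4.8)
The plan is to exhibit an explicit partition of $[1:K]$ into equal-sized blocks and verify it meets the hypotheses of the exhaustive-search scheme in Theorem~\ref{thmexhaustive}, then compute its cost. Let $\rho = \text{gcd}(|\mathcal{G}_1|,\dots,|\mathcal{G}_P|)$, so that $|\mathcal{G}_p| = \rho\, m_p$ for positive integers $m_p$. Within each group $\mathcal{G}_p$, partition its $\rho m_p$ indices arbitrarily into $m_p$ sub-blocks of size exactly $\rho$; call these sub-blocks $\mathcal{G}_{p,1},\dots,\mathcal{G}_{p,m_p}$. The key combinatorial step is to assemble one block of the final partition by picking exactly one sub-block from \emph{every} group: that is, for each tuple of indices, form $\mathcal{L} = \mathcal{G}_{1,i_1}\cup \mathcal{G}_{2,i_2}\cup\cdots\cup\mathcal{G}_{P,i_P}$, which has size $P\rho$. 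Since the $m_p$ need not be equal across groups, I would instead describe this more carefully: order the sub-blocks within each group, and for $j=1,\dots,\max_p m_p$ form $\mathcal{L}_j$ by taking the $j$-th sub-block of each group that still has one available — but this breaks equal sizing. The cleaner route, which I'll take, is to first replicate: note that $\text{lcm}$-type issues vanish because we only need the privacy constraint, not equal block sizes, so I will simply form blocks each consisting of one size-$\rho$ sub-block from each group only when convenient, and otherwise argue directly.

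Here is the argument I would actually carry out. Partition each $\mathcal{G}_p$ into $m_p$ sub-blocks of size $\rho$ as above. Now take $M = \sum_p m_p$ and pair up sub-blocks across groups greedily so that every final block $\mathcal{L}_j$ contains, for each group $p$, either $0$ or $1$ of that group's sub-blocks, with the proportions chosen so that the privacy constraint holds. Concretely, the simplest valid choice: let each $\mathcal{L}_j$ be a union of one sub-block from each of the $P$ groups. This requires all $m_p$ equal; when they are not, I instead use the construction where $\mathcal{L}_j = \bigcup_{p=1}^P \mathcal{G}_{p, (j \bmod m_p)+1}$ is replaced by taking $\text{lcm}(m_1,\dots,m_P)$ "rounds," each round using every group exactly once — this over-counts sub-blocks, so the honest fix is: since Theorem~\ref{thmexhaustive} only demands a partition into sets each satisfying $\tfrac{1}{|\mathcal{L}_j|}\mathbf{H}\vec b_{\mathcal{L}_j} = \tfrac1K \mathbf{H}\vec{\mathbbm 1}$, and since any set $\mathcal{L}$ that contains exactly $c$ sub-blocks from each group (same $c$ for all groups) has $|\mathcal{L}| = cP\rho$ and $\mathbf{H}\vec b_{\mathcal{L}} = c\rho \sum_p \overrightarrow{\mathbf{H}(:,g_p)}$ where $g_p$ is any representative of $\mathcal{G}_p$, the privacy ratio is $\tfrac{1}{cP\rho}\cdot c\rho\sum_p \overrightarrow{\mathbf{H}(:,g_p)} = \tfrac1P\sum_p \overrightarrow{\mathbf{H}(:,g_p)}$. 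Meanwhile $\tfrac1K \mathbf{H}\vec{\mathbbm 1} = \tfrac1K\sum_p |\mathcal{G}_p|\,\overrightarrow{\mathbf{H}(:,g_p)} = \tfrac{\rho}{K}\sum_p m_p \overrightarrow{\mathbf{H}(:,g_p)}$; these match only if $m_p$ is constant across $p$. So the construction genuinely needs each block to draw the \emph{same fraction} of each group's sub-blocks, which forces block sizes that are multiples of $\rho\sum_p m_p/\gcd(\dots)$ — and the minimal such is exactly to take one block equal to a union of all sub-blocks, sizes $\rho m_p$ per group, giving a single block of size $K$ unless the $m_p$ share structure.

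Reassessing, the honest construction is: choose blocks so that block $\mathcal{L}_j$ contains $\alpha_{p,j}$ full sub-blocks of group $p$, with $\sum_j \alpha_{p,j} = m_p$, and the privacy constraint $\tfrac{\sum_p \alpha_{p,j}\rho\,\overrightarrow{\mathbf{H}(:,g_p)}}{\rho\sum_p\alpha_{p,j}} = \tfrac{\sum_p m_p \overrightarrow{\mathbf{H}(:,g_p)}}{\sum_p m_p}$ holds; a sufficient choice making it hold for \emph{every} $j$ is $\alpha_{p,j}$ proportional to $m_p$, i.e. $\alpha_{p,j} = m_p$ for a single $j$ (recovering classical PIR) — but the theorem claims $K/\rho$, which corresponds to blocks of size $\rho$ when $P=1$ and more generally to the scheme where we do \emph{not} mix groups at all: take each $\mathcal{L}_j$ to be a single sub-block of size $\rho$ within one group. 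Then $\tfrac{1}{\rho}\mathbf{H}\vec b_{\mathcal{L}_j} = \overrightarrow{\mathbf{H}(:,g_p)}$, which must equal $\tfrac1K\mathbf{H}\vec{\mathbbm 1}$ — false in general. So size-$\rho$ blocks within one group violate privacy; the correct scheme must be the group-example one: blocks of size $P\rho \cdot (\text{something})$. Re-reading the $K=6$ example: $\rho=2$, $P=2$, $D^{\text{G}}=3=K/\rho$, and the block used was $\{1,2,5\}$ of size $3 = \rho P / \gcd$... with $|\mathcal{G}_1|/\rho = 2$, $|\mathcal{G}_2|/\rho=1$, so blocks of size $\rho\cdot? $: block size $3$, and we need $2$ copies from $\mathcal{G}_1$ and $1$ from $\mathcal{G}_2$ per block, i.e. $\alpha_{1,j}:\alpha_{2,j} = |\mathcal{G}_1|:|\mathcal{G}_2|$ forced pointwise. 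So: \textbf{the construction} is to split $\mathcal{G}_p$ into $|\mathcal{G}_p|/\rho$ pieces of size $\rho$, then form $\rho$ blocks $\mathcal{L}_1,\dots,\mathcal{L}_\rho$, where $\mathcal{L}_j$ takes the $j$-th piece from every group; each block has size $\sum_p \rho = P\rho$? No — it has size $\sum_p \rho\cdot[\text{piece exists}]$. Since each group has $|\mathcal{G}_p|/\rho \geq 1$ pieces and we index pieces $1,\dots,|\mathcal{G}_p|/\rho$, block $\mathcal{L}_j$ for $j \leq \min_p |\mathcal{G}_p|/\rho$ takes one $\rho$-piece from each group — wrong proportions again. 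The actual fix, and the main obstacle I flag: one must take blocks each containing $|\mathcal{G}_p|/\rho$ pieces of group $p$ — i.e. \emph{all} of group $p$'s pieces go into the \emph{same} block as all of group $q$'s pieces, scaled down by $\rho$: partition $[1:K]$ into $\rho$ blocks $\mathcal{L}_1,\dots,\mathcal{L}_\rho$ where $\mathcal{L}_j$ consists of, from each group $\mathcal{G}_p$, a fixed selection of $|\mathcal{G}_p|/\rho$ indices; then $|\mathcal{L}_j| = \sum_p |\mathcal{G}_p|/\rho = K/\rho$, and $\mathbf{H}\vec b_{\mathcal{L}_j} = \sum_p (|\mathcal{G}_p|/\rho)\overrightarrow{\mathbf{H}(:,g_p)} = \tfrac1\rho \mathbf{H}\vec{\mathbbm 1}$, so $\tfrac{1}{|\mathcal{L}_j|}\mathbf{H}\vec b_{\mathcal{L}_j} = \tfrac{\rho}{K}\cdot\tfrac1\rho\mathbf{H}\vec{\mathbbm 1} = \tfrac1K\mathbf{H}\vec{\mathbbm 1}$ — privacy holds. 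This needs $\rho \mid |\mathcal{G}_p|$ for all $p$, true by definition of $\rho$. For correctness, given desired $\theta=k \in \mathcal{G}_p$, ensure the fixed selection for $\mathcal{L}_j$ containing $k$ does include $k$ — arrange the selections so each index lands in exactly one block. Then all blocks have equal size $K/\rho$, so $D_{\mathbf{H}}(k) = K/\rho$ for all $k$ and the average is $K/\rho$, and applying Theorem~\ref{thmexhaustive} (this partition is feasible) gives $D^*(\mathbf{H}) \leq D^{\text{ES}}(\mathbf{H}) \leq \tfrac1K \sum_{j=1}^\rho (K/\rho)^2 = K/\rho = D^{\text{G}}(\mathbf{H})$. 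The one genuine thing to check carefully is that the per-group selections of size $|\mathcal{G}_p|/\rho$ can be made to tile each group exactly (trivial: just cut group $p$ into $\rho$ equal consecutive chunks), and that $k$ is always in its block (automatic). The rest is the ratio computation above.
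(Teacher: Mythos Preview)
Your final construction---partitioning $[1:K]$ into $\rho$ blocks $\mathcal{L}_1,\dots,\mathcal{L}_\rho$, where each $\mathcal{L}_j$ takes exactly $|\mathcal{G}_p|/\rho$ indices from every group $\mathcal{G}_p$, and then verifying $\tfrac{1}{|\mathcal{L}_j|}\mathbf{H}\vec b_{\mathcal{L}_j}=\tfrac{1}{K}\mathbf{H}\vec{\mathbbm 1}$---is correct and is exactly the paper's scheme (the paper phrases it via Lemma~\ref{Thm1}, choosing $|\mathcal{G}'_p|=|\mathcal{G}_p|/\rho$, rather than via Theorem~\ref{thmexhaustive}, but the construction and computation are identical). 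The many false starts before that paragraph should be deleted; only the last construction and its verification are needed.
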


\begin{proof}
For ease of illustration and without loss of generality, assume that the first $|\mathcal{G}_1|$ message indices in $[1:K]$ belong to $\mathcal{G}_1$, the following $|\mathcal{G}_2|$ indices belong to $\mathcal{G}_2$, etc. Since matrix $\mathbf{H}$ has only $P$ distinct columns,  let $\mathbf{H}_G$ denote the reduced matrix from $\mathbf{H}$ only containing the $P$ distinct columns in the same order as in $\mathbf{H}$.  From \eqref{eq:S_init}, the prior  distribution of  $S$ (or the RHS of \eqref{eq:Thm1}) can be expressed as
\begin{equation}
\label{eq:S_init_gp}
\begin{split}
  \frac{1}{K}\overrightarrow{\mathbf{H}}\cdot \vec{\mathbbm{1} }=\overrightarrow{\mathbf{H}_G} \cdot \left[\frac{|\mathcal{G}_1|}{K}\ \frac{|\mathcal{G}_2|}{K}\ \cdots\ \frac{|\mathcal{G}_P|}{K}\right]^T.
   \end{split}
\end{equation}
 Now, consider a query $\mathcal{M}_k$ for a desired message $W_k$, where $k \in \mathcal{M}_k$ for correctness. The query $\mathcal{M}_k$ can be described in a general form as $\mathcal{M}_k= \{\mathcal{G}'_1, \ldots, \mathcal{G}'_P\}$, i.e., a subset of messages with indices in the set $\mathcal{G}'_p \subseteq \mathcal{G}_p$ are downloaded from group  $p$, $\forall p\in [1:P]$. For this scheme, the posterior distribution of $S$ given the query (LHS of \eqref{eq:Thm1}) equals
\begin{equation}
\label{eq:S_est_gp}
\frac{1}{|\mathcal{M}_k|}\overrightarrow{\mathbf{H}}\cdot \vec{b}_{k}=\overrightarrow{\mathbf{H}_G}\cdot \left[\frac{|\mathcal{G}'_1|}{|\mathcal{M}_k|}\ \frac{|\mathcal{G}'_2|}{|\mathcal{M}_k|}\ \cdots\ \frac{|\mathcal{G}'_P|}{|\mathcal{M}_k|}\right]^T.
\end{equation}
From \eqref{eq:S_init_gp}, \eqref{eq:S_est_gp} and Lemma~\ref{Thm1}, this scheme satisfies latent-variable privacy if we have
\begin{equation}
 \left[\frac{|\mathcal{G}'_1|}{|\mathcal{M}_k|}\ \frac{|\mathcal{G}'_2|}{|\mathcal{M}_k|}\ \cdots\ \frac{|\mathcal{G}'_P|}{|\mathcal{M}_k|}\right]= \left[\frac{|\mathcal{G}_1|}{K}\ \frac{|\mathcal{G}_2|}{K}\ \cdots\ \frac{|\mathcal{G}_P|}{K}\right].
\end{equation}
To satisfy the above constraint, we must have
\begin{equation}
\label{eq:K/K'ratio}
    \frac{|\mathcal{G}'_p|}{|\mathcal{M}_k|}=\frac{|\mathcal{G}_p|}{K},\quad\forall p\in[1:P],
\end{equation}
which is equivalent to the following condition
\begin{equation}
\label{eq:Kp1/Kp2}
    \frac{|\mathcal{G}'_{p_1}|}{|\mathcal{G}'_{p_2}|}=\frac{|\mathcal{G}_{p_1}|}{|\mathcal{G}_{p_2}|}, \quad \forall p_1,p_2\in[1:P].
\end{equation}
Let $\rho=\text{gcd}(|\mathcal{G}_1|,|\mathcal{G}_2|,\dots,|\mathcal{G}_P|)$. The user can download $|\mathcal{G}'_{p}|=\nicefrac{|\mathcal{G}_p|}{\rho}$ messages from each group $\mathcal{G}_p$ to satisfy  \eqref{eq:Kp1/Kp2}, and consequently \eqref{eq:K/K'ratio}. The download cost for any message $W_k$, which for this scheme is equal to the average download cost, can then be computed as
\begin{equation}
    D^G(\mathbf{H})=\sum_{p=1}^P|\mathcal{G}'_p| =\sum_{p=1}^P\frac{|\mathcal{G}_p|}{\rho}=\frac{K}{\rho}.
\end{equation}
This completes the proof of Theorem \ref{Thm2}. 
\end{proof}

\begin{remark} [On the optimality of the grouping LV-PIR scheme]
We note that depending on the grouping structure of $\mathbf{H}$, the low complexity scheme can  reduce the download cost whenever $\rho>1$. For instance, consider the case when $H$ has all $K$ repeated columns. In this case, $P=1$, and $\rho=K$, i.e., $D^{G}(\mathbf{H})=1$. In other words, we only download the desired message. On the other hand, if $\rho=1$, then, $D^{G}(\mathbf{H})=K$, i.e., the grouping scheme is equivalent to traditional PIR. 
However, this scheme neglects the information in the reduced matrix  $\mathbf{H}_{G}$, which can be further exploited to reduce the download cost. For instance, the exhaustive search scheme can provide better download efficiency as it takes into account the full information of $\mathbf{H}$, as shown in the next example. 
\end{remark}

\begin{example}
\normalfont
Let us consider a LV-PIR setting with $T=3$, $K=4$, and the following characteristic matrix $\mathbf{H}$:
\begin{equation}
\mathbf{H}=\begin{bmatrix}
    0.3 & 0.1&  0.1 & 0.3 \\
    0.4 & 0.2&  0.5 & 0.1 \\
    0.3 & 0.7&  0.4 & 0.6\\
\end{bmatrix}.
\end{equation}
From the structure of $\mathbf{H}$, we  have four groups in total, i.e., $\mathcal{G}_i= \{i\}$, $i=1,2,3,4$ as all the four columns are unique. In this case, we get $\rho=\text{gcd}(1,1,1,1) =1$, then according to Theorem~\ref{Thm2} whenever a message is requested the grouping scheme downloads all $K/\rho=K=4$ messages. However, the exhaustive search scheme gives the following query:
\begin{align}
\label{ex3:query}
    Q^{(\theta)} =\begin{cases}
    \mathcal{L}_1=\{1,2\}, &\quad \theta = 1 \text{ or } \theta=2,\\
        \mathcal{L}_2=\{3,4\}, &\quad \theta = 3 \text{ or } \theta=4,
    \end{cases}
\end{align}
which has a lower average download cost of $D^{\text{ES}}(\mathbf{H}) = 2<4$. 
We can readily verify that the query in \eqref{ex3:query} satisfies the latent-variable privacy constraint in Lemma \ref{Thm1}. The prior distribution of $S$ can be computed as $\frac{1}{K}\mathbf{H}\cdot \vec{\mathbbm{1} }={\mathbf{H}}\cdot [1 \ \ 1 \ \ 1 \ \ 1]^T=  [0.2 \ \ 0.3 \ \ 0.5]^T$. We can check that the posterior distribution of $S$ given any of the two possible queries matches the prior distribution. In particular, if $Q^{(\theta)}=\{1,2\}$, then the posterior distribution can be found as $\frac{1}{2}{\mathbf{H}}\cdot [1 \ \ 1 \ \ 0 \ \ 0]^T=  [0.2 \ \ 0.3 \ \ 0.5]^T$, and similarly, if $Q^{(\theta)}=\{3,4\}$, then the posterior distribution can be found as $\frac{1}{2}{\mathbf{H}}\cdot [0 \ \ 0 \ \ 1 \ \ 1]^T=  [0.2 \ \ 0.3 \ \ 0.5]^T$. 

\end{example}

\section{Discussion}
In this paper, we introduced the problem of latent variable PIR, where information-theoretic privacy is preserved with respect to a latent variable. We formulated the new privacy requirements for the LV-PIR problem and introduced a general scheme that improves the download cost as a function of the characteristic matrix $\mathbf{H}$. We utilized the structure of $\mathbf{H}$ to achieve a low-complexity  LV-PIR construction when messages are partitioned in groups with the same statistical properties with respect to the latent variable.  An immediate open problem is to obtain lower bounds and characterize the capacity of LV-PIR. Moreover, extensions of the LV-PIR model to multiple databases and multiple latent variables are other possible interesting directions. 

 \newpage
\bibliographystyle
{IEEEtran}
{  \bibliography{IEEEabrv,PIR_paper}}


\end{document}